\newtheorem{theorem}{Theorem}[section]
\newtheorem{proposition}[theorem]{Proposition}
\theoremstyle{plain}
\theoremstyle{plain}
\theoremstyle{remark}
\newcommand{\EE}{\mathbb{E} }
\newcommand{\1}{\mathbf{1} }
\newcommand{\PP}{\mathbb{P} }
\newcommand{\RR}{\mathbb{R} }
\newcommand{\ff}{\mathcal{F} }
\begin{document}

\author{Michael R. Tehranchi \\
University of Cambridge }
\address{Statistical Laboratory\\
Centre for Mathematical Sciences\\
Wilberforce Road\\
Cambridge CB3 0WB\\
UK}
\email{m.tehranchi@statslab.cam.ac.uk}

\date{\today}
\thanks{\noindent\textit{Keywords and phrases:} peacock, lift zonoid, log-concavity}
\thanks{\textit{Mathematics Subject Classification 2010: 60G44, 91G20, 60E15}   } 

\title{Calls, zonoids, peacocks and log-concavity}
\maketitle
 
\begin{abstract} The main results are two characterisations of log-concave densities
in terms of the collection of lift zonoids corresponding to a
peacock.  These notions are recalled and connected to  arbitrage-free
asset pricing in financial mathematics.
\end{abstract}

\section{Statement of results}
The main mathematical contributions of this short note are the following two observations.
Let $f$ be a positive and differentiable probability density function, let $F$
be the corresponding cumulative  distribution function  and let $F^{-1}$ its 
quantile function. 

\begin{theorem}\label{th:lin}
There exists a martingale $(S_t)_{t \ge 0}$ such that
$$
\inf_{K \in \RR} \left\{ 
\EE[ (S_t - K)^+ ] + p K \right\} =  \sqrt{t} f( F^{-1}(p) )  \mbox{ for all } t \ge 0, \ 0 < p < 1,
$$
if and only if $f$ is log-concave. 
\end{theorem}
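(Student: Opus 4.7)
The plan is to rewrite the infimum on the left-hand side in terms of the quantile function of $S_t$, differentiate the resulting identity in $p$ to pin down $F_{S_t}^{-1}$, and then read off both directions.

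For each $t > 0$, the map $K \mapsto \EE[(S_t - K)^+] + pK$ is convex in $K$ with derivative $p - \PP(S_t > K)$. Provided $S_t$ has a continuous c.d.f., the infimum is attained at $K^\ast = F_{S_t}^{-1}(1-p)$, and a routine computation reduces the infimum to
$$
\inf_{K \in \RR}\{\EE[(S_t - K)^+] + pK\} = \EE[S_t \1_{S_t > K^\ast}] = \int_{1-p}^1 F_{S_t}^{-1}(u)\,\d u.
$$
Differentiating the hypothesised identity $\int_{1-p}^1 F_{S_t}^{-1}(u)\,\d u = \sqrt{t}\, f(F^{-1}(p))$ with respect to $p$ yields the key relation
$$
F_{S_t}^{-1}(1-p) = \sqrt{t}\,(\log f)'(F^{-1}(p)). \qquad (\ast)
$$

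For the ``only if'' direction, observe that the left-hand side of $(\ast)$ is non-increasing in $p$ because $F_{S_t}^{-1}$ is non-decreasing, while the argument $F^{-1}(p)$ on the right is non-decreasing in $p$; hence $(\log f)'$ must be non-increasing, i.e.\ $f$ is log-concave.

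For the ``if'' direction, I turn $(\ast)$ into a construction: log-concavity of $f$ ensures that $F_{S_t}^{-1}(q) := \sqrt{t}\,(\log f)'(F^{-1}(1-q))$ is non-decreasing in $q$, so it defines a probability law for $S_t$, and $S_t \stackrel{d}{=} \sqrt{t}\,X$ where $X$ has a $t$-independent law. A change of variable shows $\EE[X] = \int_\RR f'(x)\,\d x = 0$, and a standard Jensen-inequality argument then yields $\sqrt{s}\,X \le_{cx} \sqrt{t}\,X$ for $0 \le s \le t$; so $(S_t)_{t \ge 0}$ is a peacock. Kellerer's theorem supplies a martingale with these marginals, and reversing the integration verifies the required formula.

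The main obstacle is the regularity needed to make the differentiation in $(\ast)$ rigorous: one needs $S_t$ to have a continuous, strictly increasing c.d.f.\ so that $F_{S_t}^{-1}$ behaves as a genuine inverse and the infimum is attained at a unique interior point. This should follow from the smoothness and strict positivity of $f$, which are inherited by the right-hand side of the hypothesis. The other non-elementary ingredient is Kellerer's theorem, invoked as a black box to turn the peacock into a martingale.
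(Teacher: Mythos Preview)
Your argument is correct and is essentially the paper's proof unwound: the paper identifies the left-hand side as the lift-zonoid upper boundary $\hat C_{S_t}(p)$, invokes the characterisation that $\hat C$ arises from some integrable $S_t$ iff it is concave, and then reduces everything to showing $G(p)=f(F^{-1}(p))$ is concave iff $f$ is log-concave via the derivative computation $G'(p)=(\log f)'(F^{-1}(p))$ --- precisely your relation~$(\ast)$. Your route bypasses the lift-zonoid abstraction by writing $\hat C_{S_t}(p)=\int_{1-p}^{1}F_{S_t}^{-1}$ and differentiating directly, and your explicit peacock verification (scaling $S_t\stackrel{d}{=}\sqrt{t}\,X$, $\EE X=0$, Jensen) spells out what the paper leaves implicit in the monotonicity of $t\mapsto\sqrt{t}\,G(p)$ together with Kellerer's theorem.
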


\begin{theorem}\label{th:geo}
There exists a positive martingale $(S_t)_{t \ge 0}$ such that
$$
\inf_{K \in \RR} \left\{ \EE[ (S_t - K)^+ ] + p K \right\}
=  F( F^{-1}(p) + \sqrt{t} ) \mbox{ for all } t \ge 0, \ 0 < p < 1,
$$
if and only if $f$ is log-concave.
\end{theorem}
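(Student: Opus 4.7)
My plan is to translate the identity into a statement about the marginal distribution of $S_t$, derive those marginals from the formula via convex duality, and apply Kellerer's theorem to construct the martingale. Both directions hinge on a single analysis of the function $H(t,p) := F(F^{-1}(p) + \sqrt{t})$.

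For the necessity direction, the key observation is that for any integrable $S_t$, the map $p \mapsto \inf_{K \in \RR}\{\EE[(S_t - K)^+] + pK\}$ is concave on $(0,1)$, being a pointwise infimum over $K$ of functions affine in $p$. Hence the stated identity forces $H(t,\cdot)$ to be concave for each $t > 0$. A direct computation using $(F^{-1})'(p) = 1/f(F^{-1}(p))$ yields
$$\frac{\partial^2 H}{\partial p^2}(t,p) = \frac{f'(x+\sqrt{t})\,f(x) - f(x+\sqrt{t})\,f'(x)}{f(x)^3}, \qquad x = F^{-1}(p),$$
and this is non-positive for every $x \in \RR$ and $t > 0$ if and only if $(\log f)'$ is non-increasing, i.e.\ $f$ is log-concave.

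For sufficiency, assume $f$ log-concave and define
$$C_t(K) := \sup_{p \in [0,1]}\{H(t,p) - pK\}.$$
Concavity of $H(t,\cdot)$ on $[0,1]$ together with $H(t,0)=0$, $H(t,1)=1$, and $\partial_p H(t,p) = f(F^{-1}(p)+\sqrt{t})/f(F^{-1}(p)) > 0$ imply that $C_t$ is convex, non-increasing, non-negative, equal to $1-K$ for $K \le 0$, and tends to $0$ as $K \to +\infty$. These conditions characterise call price functions of probability measures $\mu_t$ on $[0,\infty)$ with unit mean. The envelope theorem then gives
$$\frac{\partial C_t}{\partial t}(K) = \frac{\partial H}{\partial t}(t, p^*(t,K)) = \frac{f(F^{-1}(p^*)+\sqrt{t})}{2\sqrt{t}} \ge 0,$$
so $t \mapsto C_t(K)$ is non-decreasing for each $K$, meaning $(\mu_t)_{t \ge 0}$ is a peacock. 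Kellerer's theorem yields a martingale $(S_t)_{t \ge 0}$ with $S_t \sim \mu_t$, automatically positive, and convex duality (bipolarity) recovers $\inf_K\{C_t(K) + pK\} = H(t,p)$.

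The main obstacle I expect is the Legendre-duality bookkeeping in the sufficiency step: namely, rigorously verifying that the abstractly defined $C_t$ is genuinely the call function of a probability measure on $[0,\infty)$ with mean $1$, which requires checking the boundary behaviour at $K \le 0$, as $K \to +\infty$, and as $p \to 0, 1$ simultaneously. Once this is in place, the envelope computation of $\partial_t C_t$ and the appeal to Kellerer are short and standard.
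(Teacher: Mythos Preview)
Your proposal is correct and follows essentially the same route as the paper: reduce the theorem, via Legendre duality and Kellerer's theorem, to the statement that $p \mapsto F(F^{-1}(p)+y)$ is concave for all $y\ge 0$ iff $f$ is log-concave, and verify the latter by differentiation. The only cosmetic difference is that the paper checks concavity by showing $H_y'(p)=f(F^{-1}(p)+y)/f(F^{-1}(p))$ is non-increasing, whereas you compute the second derivative directly; the duality bookkeeping you flag as the main obstacle is exactly what the paper packages as its Propositions characterising call functions and lift-zonoid boundaries.
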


 Note
that for any increasing bijection $Y$ on the interval $[0,\infty)$, 
the process $(S_t)_{t \ge 0}$ is a martingale with respect to the filtration
$(\ff_t)_{t \ge 0}$ if and only if $(S_{Y(t)})_{t \ge 0}$ is a martingale
with respect to the filtration $(\ff_{Y(t)})_{t \ge 0}$.  In particular,
the $\sqrt{t}$ appearing in both theorems could be replaced with any increasing
bijection on $[0,\infty)$.  

The motivation for the  choice of the $\sqrt{t}$ time-parametrisation
can be found in the following statements.  We will use the notation
$$
\varphi(z) = \frac{1}{\sqrt{2\pi}} e^{-z^2/2}
$$
to denote the standard normal density function, 
$\Phi(z) = \int_{-\infty}^z \varphi(u) du$ the standard
normal cumulative distribution function and $\Phi^{-1}$
its quantile function.
  We will  let
$(W_t)_{t \ge 0}$ be a standard Brownian motion.

\begin{proposition}\label{th:B}
$$
\min_{K \in \RR} \left\{ 
\EE[ (W_t - K)^+ ] + p K \right\} =  \sqrt{t} \varphi( \Phi^{-1}(p) )  \mbox{ for all } t \ge 0, \ 0 < p < 1.
$$
\end{proposition}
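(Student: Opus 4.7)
The strategy is a direct one-variable minimisation. First I would invoke Brownian scaling to reduce to the case $t = 1$: writing $W_t \stackrel{d}{=} \sqrt{t}\, Z$ with $Z$ standard normal and substituting $K = \sqrt{t}\, k$, the objective becomes $\sqrt{t}\{\EE[(Z-k)^+] + pk\}$, so it suffices to prove
$$
\min_{k \in \RR}\{\EE[(Z-k)^+] + pk\} = \varphi(\Phi^{-1}(p)).
$$
(The boundary case $t = 0$ is trivial, since both sides vanish.)

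Second, I would locate the minimiser by differentiation. The objective $k \mapsto \EE[(Z-k)^+] + pk$ is convex and coercive (it grows linearly at both ends, using $0 < p < 1$), so a unique minimiser exists. Using $\frac{\d}{\d k}\EE[(Z-k)^+] = -\PP(Z > k) = \Phi(k) - 1$, the first-order condition reads $\Phi(k^\ast) = 1 - p$, which gives $k^\ast = -\Phi^{-1}(p)$ via the symmetry $\Phi^{-1}(1-p) = -\Phi^{-1}(p)$.

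Finally, I would evaluate the objective at $k^\ast$. The elementary identity $\int_{k^\ast}^{\infty} z\,\varphi(z)\,\d z = \varphi(k^\ast)$ yields
$$
\EE[(Z-k^\ast)^+] = \varphi(k^\ast) - k^\ast\bigl(1 - \Phi(k^\ast)\bigr) = \varphi(k^\ast) - p\, k^\ast,
$$
and adding the linear penalty $p\, k^\ast$ cancels the $-p\, k^\ast$, leaving $\varphi(k^\ast)$. Then $\varphi(k^\ast) = \varphi(-\Phi^{-1}(p)) = \varphi(\Phi^{-1}(p))$ by the evenness of $\varphi$, as required.

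There is no substantive obstacle: the statement is a closed-form computation of a Bachelier-type call price and its Legendre dual in the strike variable. The only points demanding mild care are the sign conventions — namely, $\frac{\d}{\d K}\EE[(S-K)^+] = -\PP(S > K)$ — and the two symmetry identities $\Phi^{-1}(1-p) = -\Phi^{-1}(p)$ and $\varphi(-z) = \varphi(z)$, which together convert the natural minimiser into the symmetric form displayed in the proposition.
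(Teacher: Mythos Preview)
Your proof is correct and is precisely the kind of direct calculation the paper has in mind: the authors do not spell out a proof of this proposition at all, saying only that it is a ``straightforward calculation.'' Your reduction via Brownian scaling, first-order condition $\Phi(k^\ast)=1-p$, and evaluation using $\int_{k^\ast}^{\infty} z\,\varphi(z)\,\d z = \varphi(k^\ast)$ together with the symmetries of $\varphi$ and $\Phi^{-1}$ constitute exactly that calculation.
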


\begin{proposition}\label{th:G}
$$
\min_{K \in \RR} \left\{ \EE[ (e^{W_t - t/2} - K)^+ ] + p K \right\}
=  \Phi( \Phi^{-1}(p) + \sqrt{t} ) \mbox{ for all } t \ge 0, \ 0 < p < 1.
$$
\end{proposition}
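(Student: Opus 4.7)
The plan is to invoke the classical Black--Scholes formula: a direct calculation (conditioning on $W_t$) gives
$$\EE[(e^{W_t - t/2} - K)^+] = \Phi(d_+) - K\,\Phi(d_-)$$
for $K > 0$, where $d_\pm = -\log K /\sqrt{t} \pm \sqrt{t}/2$. The call price is convex in $K$, and its derivative in $K$ equals
$$-\PP(e^{W_t - t/2} > K) \;=\; -\Phi(d_-).$$

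Minimising $K \mapsto \EE[(e^{W_t - t/2} - K)^+] + pK$ on $(0,\infty)$ therefore reduces to solving the first-order condition $\Phi(d_-) = p$, i.e.\ $d_- = \Phi^{-1}(p)$. This defines a unique positive strike
$$K^* = \exp\bigl(-\sqrt{t}\,\Phi^{-1}(p) - t/2\bigr),$$
and for $p \in (0,1)$ the minimum is attained there, since the slope of the objective tends to $-1 + p < 0$ as $K \downarrow 0$ and to $p > 0$ as $K \to \infty$. Substituting $K^*$ back, the two contributions $-K \Phi(d_-)$ and $pK$ cancel, and using $d_+(K^*) = d_-(K^*) + \sqrt{t} = \Phi^{-1}(p) + \sqrt{t}$ yields the claimed value $\Phi(\Phi^{-1}(p) + \sqrt{t})$.

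There is essentially no obstacle: the statement is a re-expression of the Black--Scholes formula with the strike $K$ replaced by the moneyness parameter $p = \PP(S_t > K)$. An equivalent route that avoids quoting the formula is to use the general identity $\inf_K \{\EE[(X - K)^+] + pK\} = \EE[X\,\mathbf{1}_{\{X > F_X^{-1}(1-p)\}}]$ and evaluate the right-hand side for $X = e^{W_t - t/2}$ via a Girsanov change of measure, under which the computation reduces to a single Gaussian tail probability, producing the same answer in one line.
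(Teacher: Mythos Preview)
Your argument is correct and is exactly the ``straightforward calculation'' the paper alludes to without giving further details: compute the Black--Scholes call price, differentiate in $K$, solve the first-order condition $\Phi(d_-)=p$, and substitute back. The only point worth making explicit is that the minimum over all of $\RR$ coincides with the minimum over $(0,\infty)$, which follows from convexity and the fact that for $K\le 0$ the objective equals $1+(p-1)K\ge 1>\Phi(\Phi^{-1}(p)+\sqrt t)$; your boundary-slope remark already implicitly covers this.
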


 The proof of Propositions \ref{th:B} and \ref{th:G} 
are straightforward
calculations. 
Note that the linear and geometric Brownian motions
are martingales in their common filtration.
 It was an attempt to generalise
these    propositions which lead the author to discover
the above (seemingly novel) characterisations of log-concavity in one dimension
found in Theorems \ref{th:lin} and \ref{th:geo}.

To motivate interest in
expressions of the form 
$$
C(t,K) = \EE[ (S_t-K)^+ ],
$$
where $S$ is a martingale, we appeal to the arbitrage 
theory of asset pricing.  Recall that in
a typical financial market model with zero risk-free interest and dividend rates,
there is no arbitrage if the prices of all traded assets are martingales.
  A converse statement, that the absence of arbitrage implies 
	the existence of an equivalent
	measure under which the asset prices are martingales, is true
in discrete time by the theorem of Dalang, Morton \&  Willinger \cite{DMW};
 however, it is worth noting that formulations of a `correct'
converse in continuous time are considerably more involved. 
See, for instance, the book of  Delbaen \& Schachermayer \cite{DS}
for precise details. 

Now consider a market with a stock with time $u$ price $S_u$
 and a call option written on the stock
with maturity date $t$
and strike price $K$.  In the classical arbitrage theory
recalled above, it is natural to assume that $S$ is a martingale
and that the initial price of the call is  $C(t,K)$. 

Recall that in the Bachelier model, the stock price is given 
by
$$
S_t = S_0 + \sigma W_t
$$
for constants $S_0 \in \RR$ and $\sigma > 0$ from which the call price is computed as
$$
\EE[ (S_t - K)^+ ] = \sigma \sqrt{t} \varphi\left( \frac{S_0-K}{\sigma \sqrt{t} } \right) + (S_0-K) \Phi\left( \frac{S_0-K}{\sigma \sqrt{t} } \right).
$$
Note that the linear Brownian motion in Proposition \ref{th:B}
corresponds to the Bachelier model for the stock price, with zero
initial price and unit linear volatility.

Along similar lines, recall also that in the Black--Scholes model, the 
the stock price is given 
by
$$
S_t = S_0e^{ \sigma W_t - \sigma^2 t/2 }
$$
for constants $S_0 > 0$ and $\sigma > 0$ from which the call price is computed as
$$
\EE[ (S_t - K)^+ ] = \Phi\left( \frac{\log(S_0/K)}{\sigma \sqrt{T} } + \frac{\sigma \sqrt{T} }{2} \right) - 
K  \Phi\left( \frac{\log(S_0/K)}{\sigma \sqrt{T} } - \frac{\sigma \sqrt{T} }{2} \right)
$$
for $K > 0$. 
The 
geometric Brownian motion in Proposition \ref{th:B} corresponds to the Black--Scholes
model (with unit initial price and unit geometric volatility).
We note here that Proposition \ref{th:G} appears in  \cite{SIFIN}
and is employed   to derive upper bounds on Black--Scholes
implied volatility.  See section \ref{se:upperbound} for some brief details.

Note that, in Theorems \ref{th:lin} and \ref{th:geo}, only the  marginal laws
of the random variables $S_t$ appear explicitly, but not the joint law of the 
process $(S_t)_{t \ge 0}$.  Indeed,  the filtration $(\ff_t)_{t \ge 0}$
for which the martingale property is defined is only implicit.
Therefore, we find it useful to recall the definition of a term
 popularised by Hirsh, Profeta, Roynette \& Yor \cite{HPRY}: 
a peacock is a collection of random variables $(S_t)_{t \ge 0}$ with
the property that there 
exists a filtered probability space on which a martingale  $(\tilde S_t)_{t \ge 0}$
is defined  such that $S_t \sim \tilde S_t$ for all $t \ge 0$.
The term peacock is derived from the French acronym PCOC, 
Processus Croissant pour l'Ordre Convexe. 
Peacocks have a useful characterisation in terms of 
the prices of call options 
thanks to a theorem of Kellerer \cite{kellerer}. 
\begin{theorem}
The family $(S_t)_{t \ge 0}$ of integrable random variables is a peacock if and only if
the following holds:  the map $t \mapsto \EE(S_t)$ is constant and the map
$t \mapsto \EE[ (S_t-K)^+ ]$ is increasing for all $K \in \RR$.
\end{theorem}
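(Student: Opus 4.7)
The plan is to treat the two directions separately. The forward implication is a direct application of Jensen's inequality for conditional expectations. If $(\tilde{S}_t)_{t \ge 0}$ is a martingale with $S_t \sim \tilde{S}_t$, then $\EE(S_t) = \EE(\tilde{S}_0)$ is constant by the martingale property, while for $s \le t$ the conditional Jensen inequality applied to the convex map $x \mapsto (x-K)^+$ gives
$$
\EE[(\tilde{S}_t - K)^+] = \EE\bigl[\EE[(\tilde{S}_t - K)^+ \mid \ff_s]\bigr] \ge \EE\bigl[(\EE[\tilde{S}_t \mid \ff_s] - K)^+\bigr] = \EE[(\tilde{S}_s - K)^+],
$$
so $t \mapsto \EE[(S_t - K)^+]$ is increasing in $t$ for each fixed $K$.

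For the reverse implication I would proceed in three stages. First, rephrase the two scalar conditions as the statement that the family of laws $\mu_t := \LL(S_t)$ is increasing in the convex order, i.e.\ $t \mapsto \int \phi \, d\mu_t$ is increasing for every convex $\phi$ of at most linear growth. This is a routine consequence of representing such a $\phi$ as an integral of call payoffs $x \mapsto (x-K)^+$ against its second-derivative measure, together with an affine correction $ax+b$ whose $\mu_t$-integral is time-independent thanks to the hypothesis on $\EE(S_t)$. Second, for any finite grid $0 = t_0 < t_1 < \cdots < t_n$, Strassen's theorem produces a martingale coupling of $(\mu_{t_0}, \ldots, \mu_{t_n})$: Strassen's basic result is that two measures with equal means are comparable in the convex order if and only if they admit a martingale kernel, and splicing consecutive one-step couplings yields a consistent finite-dimensional distribution on the grid.

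The final and most substantial step is to promote this collection of finite-dimensional distributions into a genuine process on $[0, \infty)$ realisable as a martingale. Following Kellerer's original insight, I would restrict attention to Markov couplings and exploit a compactness argument on the space of martingale kernels: one constructs, for each pair $s < t$, a kernel $P_{s,t}$ transporting $\mu_s$ to $\mu_t$ in a martingale fashion and satisfying the Chapman--Kolmogorov identity $P_{s,u} = P_{s,t} P_{t,u}$, using tightness of martingale couplings (Prokhorov) together with a careful selection procedure that remains stable under refinement of the time grid to a countable dense subset of $[0,\infty)$. Once this consistent family of Markov kernels is in hand, the Kolmogorov extension theorem produces the process $\tilde{S}$ on the canonical path space, and the martingale property is immediate from the kernel identities. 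The principal obstacle is precisely the construction of a Chapman--Kolmogorov-consistent family of martingale kernels; this is where Kellerer's result genuinely goes beyond a naive gluing of Strassen couplings, and it relies on specific features of the convex order in one dimension that ensure the relevant operations on kernels preserve the martingale property.
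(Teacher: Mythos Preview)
Your outline is sound, but note that the paper does not supply its own proof of this theorem at all: it simply states the result and refers the reader to Hirsch \& Roynette \cite{HR}. So there is no ``paper's own proof'' to compare against in the usual sense.

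That said, your sketch is the classical route and is essentially correct. The forward implication via conditional Jensen is exactly right. For the converse, your reduction to convex order is fine, and invoking Strassen for finite grids is standard. You are also right to flag the passage from finitely many marginals to a genuine continuous-time martingale as the real content of Kellerer's theorem, and to locate the difficulty in building a Chapman--Kolmogorov-consistent family of Markov martingale kernels. It is worth remarking that the proof in the cited reference \cite{HR} proceeds somewhat differently from what you describe: rather than following Kellerer's original compactness argument on kernels directly, Hirsch and Roynette regularise the marginals so that the associated call-price function becomes smooth enough to read off an explicit diffusion (a Dupire-type local volatility construction), and then pass to a limit. Both approaches ultimately rely on one-dimensional structure, but the \cite{HR} argument sidesteps the delicate kernel-selection step you mention by working with an SDE in the regularised case.
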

See the paper  \cite{HR} of Hirsh \& Roynette for a proof.

Finally, to see why one might want to compute what the Legendre transform
of a call price $C(t,K)$ with respect to the strike parameter $K$, we 
recall that the zonoid  of an integrable random
$d$-vector $X$ is the set 
$$
Z_X = \left\{ \EE[X g(X)  ]  \mbox{ measurable } g:\RR^d \to [0,1]  \right\} \subseteq \RR^d,
$$
and that the lift zonoid of $X$ is the zonoid of the $(1+d)$-vector $(1,X)$ given
by
$$
\hat Z_X =  \left\{ ( \EE[g(X)], \EE[ X g(X)   ] )  \mbox{ measurable } g:\RR^d \to [0,1]  \right\} \subseteq \RR^{1+d}.
$$

The notion of lift zonoid was introduced in
the paper of Koshevoy \& Mosler \cite{KM}.
 We note that the calculation of the lift zonoid of a 
Gaussian measure can be found in another paper of  
Koshevoy \& Mosler \cite[Example 6.3]{KM1}.  We will see that this
calculation is essentially our Proposition \ref{th:B}.

In the case $d=1$, the lift zonoid $\hat Z_{X}$ is a convex set 
contained in the rectangle
$$
[0,1] \times [-m_-, m_+ ].
$$
where $m_{\pm} = \EE( X^{\pm}).$
We can define the upper boundary of the lift zonoid 
by the function $\hat C_X: [0,1] \to \RR$ given by
\begin{align*}
\hat C_{X}(p) &= \sup\{ q: (p,q) \in \hat Z_{X} \} \\
&= \sup\left\{ \EE[ X g(X) ],   \ \   \mbox{ measurable }   g:\RR
 \to [0,1] \mbox{ with } \EE[ g(X) ] = p \right\}.
\end{align*}
Note that by replacing $g$ with $1-g$ in the definition, we see that
$\hat Z_{X}$ is symmetric about the point $(\tfrac{1}{2}, \tfrac{1}{2} m )$
where $m= m_+-m_- = \EE(X)$.
Hence, we can recover $\hat Z_{X}$ from its upper boundary from the formula
$$
\hat Z_{X}= \left\{ (p, q): 0 \le p \le 1,  m  - \hat C_X(1-p)  \le q \le \hat C_X(p) \right\}.
$$

Our interest in the notion of lift zonoid  is explained by the following result:
\begin{proposition}\label{th:duality} We have $\hat C_X(0) = 0$, $\hat C_X(1) = \EE(X)$ and 
$$
\hat{C}_X(p) = \min_{K \in \RR} [ C_X(K) + p K ]   \mbox{ for all } 0 < p < 1
$$
where $C_X(K) = \EE[ (X-K)^+ ]$.  Furthermore, we have
$$
C_X(K) = \max_{0 \le p \le 1} [ \hat C_{X}(p) - pK ]  \mbox{ for all } K \in \RR.
$$
\end{proposition}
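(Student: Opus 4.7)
The plan is to handle the two endpoint identities first, then establish the primary Legendre duality $\hat C_X(p) = \min_K[C_X(K) + pK]$ through matched weak and strong duality, and finally derive the inverse formula as a biconjugate identity.

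For the endpoint identities, observe that if $g:\RR \to [0,1]$ is measurable with $\EE[g(X)] = 0$, then $g(X) = 0$ almost surely, so $\EE[X g(X)] = 0$; symmetrically, $\EE[g(X)] = 1$ forces $g(X) = 1$ almost surely, giving $\EE[X g(X)] = \EE(X)$. Both suprema are attained by constant $g$.

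For the first duality formula with $0 < p < 1$, the upper bound comes from the pointwise decomposition $X g(X) = (X-K) g(X) + K g(X)$ together with the inequality $(X-K) g(X) \le (X-K)^+$, valid whenever $0 \le g \le 1$ (check the cases $X > K$ and $X \le K$ separately). Taking expectations under $\EE[g(X)] = p$ yields $\EE[X g(X)] \le C_X(K) + pK$, and then taking sup over $g$ and inf over $K$ gives $\hat C_X(p) \le \inf_K[C_X(K) + pK]$. For the matching lower bound, I will exhibit a specific $K^*$ and $g^*$ achieving equality. Choose $K^*$ to be any $(1-p)$-quantile of $X$, so that $\PP(X > K^*) \le p \le \PP(X \ge K^*)$, and set
$$
g^*(x) = \mathbf{1}\{x > K^*\} + \alpha \, \mathbf{1}\{x = K^*\},
$$
with $\alpha \in [0,1]$ chosen so that $\EE[g^*(X)] = p$, which absorbs any atom of $X$ at $K^*$. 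A direct computation then gives $\EE[X g^*(X)] = \EE[X \mathbf{1}\{X > K^*\}] + \alpha K^* \PP(X = K^*) = C_X(K^*) + K^* p$, which proves $\hat C_X(p) \ge C_X(K^*) + K^* p$ and simultaneously shows that the infimum is attained at $K^*$.

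The inverse formula $C_X(K) = \max_p[\hat C_X(p) - pK]$ is the biconjugate. The first formula already yields $\hat C_X(p) - pK \le C_X(K)$ uniformly in $p \in [0,1]$. For the reverse inequality, substitute $p = \PP(X > K)$ and $g = \mathbf{1}\{X > K\}$ into the definition of $\hat C_X$: since $\EE[g(X)] = p$ and $\EE[X g(X)] = C_X(K) + pK$, this $p$ attains the supremum. The main obstacle throughout is the atom-handling step in constructing the optimal $g^*$ when $X$ has a point mass at the quantile $K^*$; everything else is routine convex duality.
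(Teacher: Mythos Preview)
Your proof is correct and follows essentially the same Neyman--Pearson route as the paper: the pointwise inequality $Xg(X)\le (X-K)^+ + Kg(X)$ for the upper bound, and the threshold function $g^*=\mathbf 1_{\{x>K^*\}}+\alpha\,\mathbf 1_{\{x=K^*\}}$ (equivalently the paper's $\lambda\,\mathbf 1_{(K,\infty)}+(1-\lambda)\,\mathbf 1_{[K,\infty)}$) for the matching lower bound. You are simply more explicit than the paper, which omits the endpoint identities and the biconjugate formula $C_X(K)=\max_p[\hat C_X(p)-pK]$; your treatment of these is fine.
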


Note that if we let 
$$
\Theta(K) = \PP(X \ge K)
$$
then we have
$$
C_X(K) = \int_K^{\infty} \Theta(\kappa) d\kappa
$$
by Fubini's theorem.  Also 
if we define the inverse function $\Theta^{-1}$ for $0 < p < 1$ by
$$
\Theta^{-1}(p) = \inf\{ K: \Theta(K) \ge p \}
$$
then by a result of Koshevoy \& Mosler  \cite[Lemma 3.1]{KM} we have
$$
\hat C_X(p) = \int_0^p \Theta^{-1}(\phi) d\phi.
$$
These representations could used to prove Proposition \ref{th:duality}.
However since the result can be viewed as an 
application of the Neyman--Pearson lemma, we include a short proof for completeness. 
\begin{proof}
For any measurable function $g$ valued in $[0,1]$ 
we have
$$
X g(X) \le  (X-K)^+ + K g(X)
$$
with equality when
$g$ is of the form
$$
g   =  \lambda \1_{(K, \infty) } +  (1-\lambda)  \1_{[K, \infty)} 
$$
where $\lambda \in [0,1]$.  Computing expectations and optimising over
$g$ yields
$$
\hat C_X(p) \le C_X(K) + Kp
$$
with equality if
$$
\PP(X > K  ) \le p \le \PP(X \ge K ).
$$
\end{proof}

We remark that the explicit connection between lift zonoids and the price of call options has
been noted before, for instance  in the paper of Mochanov \& Schmutz \cite{MS}.
 
It is interesting to observe 
that a consequence of Proposition \ref{th:duality}
is that for two integrable random variables $X$ and $Y$ with
the same mean $\EE(X) = \EE(Y)$, that
the following are equivalent, as noted by Koshevoy \& Mosler \cite[Theorem 5.2]{KM},
\begin{itemize}
\item  $X$ is dominated by $Y$ with respect to the lift zonoid order, in the sense that $\hat Z_X \subseteq \hat Z_Y$,
\item $\hat C_X(p) \le \hat C_Y(p)$ for all $0 \le p \le 1$,
\item $C_X(K) \le C_Y(K)$ for all real $K$,
\item   $X$ is dominated by  $Y$ with respect to the convex order, in the sense that
$\EE[ \psi(X) ] \le \EE[ \psi(Y) ]$ for all convex $\psi$ for which the expectations
are defined.
\end{itemize}
When $d > 1$, things are slightly more subtle. In particular, see the paper of
Koshevoy \cite{K} for an example of random vectors $X$ and $Y$ such that 
$X$ is dominated by $Y$ with respect to the lift zonoid order, and yet $X$
is not dominated by $Y$ with respect to the convex order.

We now briefly look at the case where the random $X$ is strictly positive:
\begin{proposition}\label{th:duality2} Suppose $\PP(X> 0 ) = 1$ and that $\EE(X) = m.$
Then the upper boundary of its lift zonoid is a strictly increasing continuous
function $\hat C_X: [0,1] \to [0,m]$.   Its inverse is given by $\hat C_X^{-1}(0)=0$
and $\hat C_X^{-1}(m) = 1$, and
$$
\hat C_X^{-1}(q) =   \max_{K > 0} \frac{ q  - C_X(K)}{K}    \mbox{ for all } 0 < q < m.
$$
\end{proposition}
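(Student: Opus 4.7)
The plan is to reduce everything to the Legendre duality from Proposition \ref{th:duality},
$$
\hat C_X(p) = \inf_{K \in \RR}[C_X(K) + pK],
$$
combined with the observation that strict positivity of $X$ forces the infimum to be attained at a \emph{strictly positive} $K$ whenever $p \in (0,1)$. The endpoint values $\hat C_X(0) = 0$ and $\hat C_X(1) = m$ are immediate from Proposition \ref{th:duality}, and concavity of $p \mapsto \hat C_X(p)$ on $[0,1]$ is built into its representation as an infimum of affine functions.

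The structural step, on which both strict monotonicity and the inverse formula rest, is to show that for each $p \in (0,1)$ the infimum above is attained at some $K^* > 0$. For $K \le 0$, positivity of $X$ yields $C_X(K) = m - K$, so $C_X(K) + pK = m - (1-p)K$ is non-increasing on $(-\infty, 0]$ with right derivative $p - 1 < 0$ at $K = 0$; hence minimizers cannot lie in $(-\infty, 0]$. On the other hand $C_X(K) + pK \to \infty$ as $K \to \infty$, and the functional is continuous on $\RR$, so a minimizer $K^* \in (0, \infty)$ exists.

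With this in hand, strict monotonicity on $(0,1)$ follows by the standard sandwich: for $0 < p_1 < p_2 < 1$ with respective minimizers $K_1^*, K_2^* > 0$, evaluating the duality formula at the non-optimal minimizer in each case gives
$$
(p_2 - p_1) K_2^* \; \le \; \hat C_X(p_2) - \hat C_X(p_1) \; \le \; (p_2 - p_1) K_1^*,
$$
and the left-hand side is strictly positive. Continuity on $(0,1)$ is automatic from concavity; at $p = 0$, the bound $\hat C_X(p) \le C_X(K) + pK$ combined with $C_X(K) \to 0$ as $K \to \infty$ gives $\hat C_X(0^+) = 0$; at $p = 1$, concavity forces $\hat C_X(p) \ge pm$, while monotonicity and $\hat C_X(1)=m$ force $\hat C_X(p) \le m$, so $\hat C_X(1^-) = m$. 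Thus $\hat C_X : [0,1] \to [0,m]$ is a continuous strictly increasing bijection, and in particular $\hat C_X^{-1}(0) = 0$, $\hat C_X^{-1}(m) = 1$.

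For the inverse formula, fix $q \in (0,m)$ and let $p = \hat C_X^{-1}(q) \in (0,1)$. Evaluating the duality identity at an arbitrary $K > 0$ gives $q = \hat C_X(p) \le C_X(K) + pK$, equivalently $p \ge (q - C_X(K))/K$; equality holds at the (positive) minimizer $K^*$ supplied by the structural step, so
$$
\hat C_X^{-1}(q) \; = \; \max_{K > 0} \frac{q - C_X(K)}{K}.
$$
The main obstacle I anticipate is the structural positivity-of-minimizer step; once that is pinned down via the right-derivative computation at $K = 0$, the remaining monotonicity, continuity, and inversion assertions fall out as essentially formal manipulations of the Legendre duality.
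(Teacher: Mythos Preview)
Your argument is correct. The key structural observation---that positivity of $X$ forces the right derivative of $K \mapsto C_X(K) + pK$ at $K=0$ to be $p-1 < 0$, hence the minimiser lies strictly to the right of zero---is exactly what makes both strict monotonicity and the inverse formula work, and you have handled it cleanly. The sandwich inequality for strict monotonicity and the endpoint continuity arguments are sound.

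Your route, however, is genuinely different from the paper's. The paper obtains strict monotonicity from the integral representation $\hat C_X(p) = \int_0^p \Theta^{-1}(\phi)\,d\phi$ together with $\Theta^{-1}(p) > 0$, and then derives the inverse formula by introducing an auxiliary variable $Y$ whose law is that of $1/X$ under the measure with density $X/m$. One computes $\hat C_X^{-1}(q) = 1 - m\,\hat C_Y(1 - q/m)$, applies Proposition~\ref{th:duality} to $Y$, and finally uses the identity $m\,C_Y(K) = 1 - Km + K\,C_X(1/K)$ to translate back. What your approach buys is directness: no change of measure, no auxiliary variable, just the Legendre duality and a first-order condition. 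What the paper's approach buys is an explicit duality between $X$ and $Y$ that resurfaces later in the discussion of geometric symmetry, where the symmetry of the lift zonoid about the diagonal is precisely the statement that $X$ and $Y$ have the same law.
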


Since Proposition \ref{th:duality2} appears to be new, or at least its statement
 does not seem to be easy to find in the literature, we now offer a proof.

\begin{proof}  Since   $\PP( X > 0) = 1$ almost surely we can conclude
$$
\Theta^{-1}(p) > 0
$$
for all $0 < p < 1$.  This shows that  $\hat C_X$ is strictly increasing.

Now, let $Y$ be a positive random variable such that
$$
\PP( Y \le K ) = \EE\left[ \frac{X}{m} \1_{\{ Y \ge 1/K \}} \right] \mbox{ for all } K > 0.
$$
That is to say, the distribution of $Y$ is given by  the distribution of $1/X$
under the equivalent measure with density $X/m$.   Note hat
\begin{align*}
\hat C_X^{-1}(q) & = \inf\{ \EE[ g(X) ], \ \mbox{ measurable } g:\RR \to [0,1]
 \mbox{ with } \EE[ X g(X) ] = q \} \\
& = 1 -  \sup\{ \EE[ g(X) ], \ \mbox{ measurable } g:\RR \to [0,1]
 \mbox{ with } \EE[ X g(X) ] = m-q \} \\
& = 1 - \sup\left\{ m \EE\left[ Y g(Y) \right], \ \mbox{ measurable } g:\RR \to [0,1] \mbox{ with } 
\EE\left[   g(Y) \right] = 1- q/m  \right\} \\
& = 1 - m  \hat C_Y(1 -q/m) \\ 
& =  1 - \min_{K} [ m C_Y(K) + (m-q) K ], 
\end{align*} 
where the minimisation can be restricted to positive $K$.  
The proof is concluded by noting that
\begin{align*}
m C_Y(K)  &= \EE[ (1- XK)^+ ] \\
& = \EE[ 1- XK + (XK-1)^+ ] \\
& = 1 - Km + K C_X(1/K)
\end{align*}
for $K > 0$.
\end{proof}

To prove Theorems \ref{th:lin} and \ref{th:geo}, we now need to know how to characterise the call
price function $C_X(\cdot)$ of an integrable random variable $X$,
as well as  the upper boundary $\hat C_X(\cdot)$ of its lift zonoid.
The following fact is well known.  
The proof is well-known, and can be found in the paper of 
Hirsh \& Roynette \cite[Proposition 2.1]{HR}, for instance.  In the financial context, the 
 link from the call price function $C_X$ to and the random variable $X$ is sometimes called the 
Breeden--Litzenberger formula.

\begin{proposition}\label{th:call}
Suppose that the function $C:\RR \to \RR_+$ is decreasing, convex and
satisfies
$$
C (K) \to 0 \mbox{ as } K \to \infty
$$
and 
$$
C(K) + K \to m \mbox{ as } K \to - \infty
$$
for some finite constant $m$.  There there exists 
a (unique in law)  integrable random variable $X$  such that
$$
m =\EE(X)
$$
and 
$$
C(K) = C_{X}(K) \mbox{ for all } K \in \RR.
$$
\end{proposition}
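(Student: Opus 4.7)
The plan is to extract the distribution of $X$ from $C$ via the Breeden--Litzenberger recipe: the first derivative of $C$ is (minus) the survival function and the second derivative is the density. More precisely, since $C$ is convex, its right-derivative $C'_+(K)$ exists everywhere, is non-decreasing, and is right-continuous. Since $C$ is decreasing and non-negative with $C(K) \to 0$ as $K \to +\infty$, convexity forces $C'_+(K) \uparrow 0$ as $K \to +\infty$. Similarly, writing $C(K) = -K + [C(K)+K]$ and using that $C(K)+K \to m$ as $K \to -\infty$, convexity of $C(K)+K$ (it is convex, being the sum of a convex and affine function) and the fact that it is bounded below by $0$ (since $C(K) \ge 0 \ge -K$ for $K \le 0$) imply $C'_+(K) + 1 \downarrow 0$ as $K \to -\infty$, i.e.\ $C'_+(K) \to -1$.

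First I would define $\Theta(K) := -C'_+(K)$, which the previous paragraph shows is a $[0,1]$-valued, non-increasing, right-continuous function with $\Theta(-\infty) = 1$ and $\Theta(+\infty) = 0$. Hence there is a (unique in law) random variable $X$ with $\Theta(K) = \PP(X > K)$. Next, I would recover $C$: by the fundamental theorem of calculus for the convex function $C$, for any $K_0 < K_1$,
$$
C(K_0) - C(K_1) = \int_{K_0}^{K_1} \Theta(\kappa)\, \d\kappa,
$$
and letting $K_1 \to +\infty$ gives $C(K_0) = \int_{K_0}^\infty \Theta(\kappa)\, \d\kappa = \EE[(X-K_0)^+] = C_X(K_0)$ by Fubini, provided the latter expectation is finite.

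The main substantive step is to verify integrability of $X$ and that $\EE(X) = m$. For $K < 0$, split
$$
C(K) + K = \int_0^\infty \Theta(\kappa)\, \d\kappa - \int_K^0 \bigl(1 - \Theta(\kappa)\bigr) \d\kappa.
$$
Both terms are monotone in $K$, so taking $K \to -\infty$ and using the hypothesis $C(K) + K \to m$ yields
$$
m = \int_0^\infty \PP(X > \kappa)\, \d\kappa - \int_{-\infty}^0 \PP(X \le \kappa)\, \d\kappa,
$$
which is exactly $\EE(X)$, and in particular finite. Uniqueness in law is immediate since $C$ determines $C'_+$ everywhere, hence determines $\Theta$ and hence the law of $X$.

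The only mild obstacle is the bookkeeping at $-\infty$: one must argue that $C'_+ \to -1$ (not just that some subsequential limit of $C'_+$ exists), and this uses monotonicity of $C'_+$ together with the existence of the limit of $C(K)+K$. Everything else is a routine application of Fubini and the standard one-to-one correspondence between right-continuous non-increasing $[0,1]$-valued functions on $\RR$ with the correct limits and distribution functions on $\RR$.
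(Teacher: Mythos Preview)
The paper does not actually supply a proof of this proposition; it simply remarks that the result is well known and refers to Hirsch \& Roynette \cite[Proposition 2.1]{HR}, noting in passing that the construction is the Breeden--Litzenberger formula. Your proposal \emph{is} the Breeden--Litzenberger construction spelled out in detail, so there is nothing to compare: you have written exactly the standard argument that the paper alludes to.

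Your proof is correct, with one slip. The parenthetical claim that $C(K)+K$ is ``bounded below by $0$ (since $C(K)\ge 0 \ge -K$ for $K\le 0$)'' is wrong on two counts: the chain $0\ge -K$ fails for $K<0$, and in any case $C(K)+K$ need not be non-negative (indeed $m=\EE(X)$ can be negative). Fortunately you do not actually use this; as you say in your final paragraph, what is really needed is that $C'_+ +1$ is non-decreasing and that $C(K)+K$ has a finite limit as $K\to -\infty$, which together force $C'_+(K)+1\to 0$. (If the limit of $C'_+ +1$ were strictly negative, integrating would send $C(K)+K\to+\infty$; if strictly positive, to $-\infty$.) Just delete the faulty parenthetical and your argument stands.
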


The next result is the lift zonoid version of Proposition \ref{th:call}.
Its proof can be found in the paper of Koshevoy \& Mosler
\cite[Theorem 3.5]{KM}.
\begin{proposition}
Suppose that $\hat C:[0,1] \to \RR$ is concave function
and such that $\hat C(0) = 0$ and $\hat C(1) = m$.
Then there exists a (unique in law) integrable random variable $X$ such that
$$
m =\EE(X)
$$
and 
$$
\hat C(p) = \hat C_X(p)  \mbox{ for all } 0 \le p \le 1.
$$
\end{proposition}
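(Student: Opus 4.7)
The plan is to invert, at the level of functions, the duality expressed in Proposition \ref{th:duality}. Given the concave $\hat C$ on $[0,1]$ with $\hat C(0) = 0$ and $\hat C(1) = m$, I would set
$$
C(K) = \sup_{p \in [0,1]} [\hat C(p) - pK], \qquad K \in \RR.
$$
As a supremum of affine functions in $K$ with nonpositive slopes, this $C$ is automatically convex and (weakly) decreasing. The remaining boundary conditions required by Proposition \ref{th:call} I would verify separately: concavity together with the finite endpoint values forces $\hat C$ to be bounded on $[0,1]$, so for large $K$ the supremum concentrates near $p=0$ (contributions from $p \ge \delta$ are at most $\sup \hat C - \delta K$), and upper semicontinuity of $\hat C$ at $0$ then gives $C(K) \to 0$ as $K \to +\infty$. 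A symmetric argument near $p = 1$ yields $C(K) + K \to m$ as $K \to -\infty$.

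With these hypotheses checked, Proposition \ref{th:call} produces a unique-in-law integrable random variable $X$ with $\EE(X) = m$ and $C_X = C$. I would finish by applying the first identity of Proposition \ref{th:duality} to $X$:
$$
\hat C_X(p) = \inf_{K \in \RR} [C_X(K) + pK] = \inf_{K \in \RR} \Bigl[\sup_{q \in [0,1]} [\hat C(q) - qK] + pK\Bigr],
$$
and then invoking the biconjugate theorem of convex analysis to identify the right-hand side with $\hat C(p)$ for every $p \in [0,1]$. Uniqueness in law of $X$ propagates from the uniqueness in Proposition \ref{th:call}, since $\hat C$ determines $C$ through its Legendre transform.

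The main technical point is the upper semicontinuity of $\hat C$ at the two endpoints. Concave functions on $[0,1]$ are automatically continuous on the open interval $(0,1)$, but concavity alone permits downward jumps at $0$ or $1$, and the biconjugate construction above would silently round such jumps upward. A clean write-up would therefore either read the hypothesis as implicitly including the usc condition at the endpoints---the natural reading, since $\hat C$ is the upper boundary of a closed convex set in the intended conclusion---or else pass to the usc envelope and note that this envelope still has boundary values $0$ and $m$ thanks to the one-sided inequality $\hat C(p) \ge (1-p)\hat C(0) + p\hat C(1)$ imposed by concavity across $[0,1]$.
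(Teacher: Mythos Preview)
The paper does not supply its own proof of this proposition; it simply cites Theorem~3.5 of Koshevoy and Mosler \cite{KM}. So there is no in-paper argument to compare against, and your duality-based route via Proposition~\ref{th:call} and Proposition~\ref{th:duality} is a perfectly natural way to fill the gap. The main body of your argument (define $C$ as the Legendre transform of $\hat C$, verify the hypotheses of Proposition~\ref{th:call}, then recover $\hat C$ as the biconjugate) is correct once the endpoint continuity is in hand.

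Your identification of the endpoint issue is accurate, and your first resolution (read the hypothesis as implicitly including upper semicontinuity at $0$ and $1$) is the right one. However, your second proposed resolution does not work. Passing to the usc envelope $\hat C^*$ of a concave $\hat C$ replaces the boundary value $\hat C(0)=0$ by $\lim_{p\to 0^+}\hat C(p)$, and the inequality $\hat C(p)\ge (1-p)\hat C(0)+p\hat C(1)$ you invoke gives only a \emph{lower} bound on that limit, not an upper bound; so $\hat C^*(0)$ may well be strictly positive. In that case the $X$ produced would satisfy $\hat C_X=\hat C^*\ne \hat C$, and in fact no integrable $X$ can match a discontinuous $\hat C$ at all, since $\hat C_X(p)=\int_0^p \Theta^{-1}(\phi)\,d\phi$ is absolutely continuous on $[0,1]$ whenever $\EE|X|<\infty$. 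So the usc hypothesis is not a technicality one can smooth away---it is genuinely needed for the conclusion to hold, and your write-up should simply adopt it (as the paper implicitly does, and as is automatic for the specific $\hat C$'s arising in Propositions~\ref{th:lin1} and~\ref{th:geo1}).
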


In light of the proceeding discussion, we now see that Theorem \ref{th:lin}
is equivalent to 
\begin{proposition}\label{th:lin1}
The map
$$
p \mapsto f( F^{-1}(p) )
$$
is concave if and only if $f$ is log-concave.
\end{proposition}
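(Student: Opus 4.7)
My plan is to reduce the equivalence to a pointwise computation of the second derivative of $g(p) = f(F^{-1}(p))$. Since $f$ is positive and differentiable, $F$ is a $C^1$ bijection from $\RR$ onto $(0,1)$, and the inverse function theorem gives
$$
(F^{-1})'(p) = \frac{1}{f(F^{-1}(p))}
$$
for all $0 < p < 1$. I would then apply the chain rule to $g$ once and simplify using this identity.

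The first derivative collapses nicely: one obtains
$$
g'(p) = f'(F^{-1}(p)) \cdot \frac{1}{f(F^{-1}(p))} = (\log f)'(F^{-1}(p)).
$$
Differentiating a second time, again with the chain rule,
$$
g''(p) = (\log f)''(F^{-1}(p)) \cdot \frac{1}{f(F^{-1}(p))}.
$$
Because $f > 0$, the sign of $g''(p)$ is exactly the sign of $(\log f)''$ evaluated at the point $x = F^{-1}(p)$. As $p$ ranges over $(0,1)$, the point $F^{-1}(p)$ ranges over the support of $f$, so $g'' \le 0$ everywhere on $(0,1)$ if and only if $(\log f)'' \le 0$ everywhere on the support of $f$.

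The equivalence then follows: $g$ is concave on $[0,1]$ precisely when $\log f$ is concave on its support, which is the definition of log-concavity of $f$. I would also briefly remark on the boundary behaviour of $g$ at $0$ and $1$ to confirm that concavity on the open interval extends to the closed interval, but this is routine since $g$ is bounded below by $0$. I do not anticipate any serious obstacle; the only mild care needed is to verify that twice-differentiability of $g$ is legitimate, which requires either assuming $f$ is $C^2$ or interpreting concavity and log-concavity in the weaker midpoint or distributional sense, where the same chain-rule identity for $g'$ already shows that $g'$ is monotone decreasing iff $(\log f)'$ is monotone decreasing along the range of $F^{-1}$.
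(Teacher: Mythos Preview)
Your argument is correct and is essentially the paper's own proof: the paper computes $G'(p)=f'(F^{-1}(p))/f(F^{-1}(p))=(\log f)'(F^{-1}(p))$ and observes that $G'$ is decreasing iff $\log f$ is concave, which is exactly your first-derivative identity (you simply carry it one step further to $g''$, and then note the first-derivative version anyway).
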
\label{th:geo1}
while Theorem \ref{th:geo} is equivalent to
\begin{proposition}
The map
$$
p \mapsto F( F^{-1}(p) + y)
$$
is concave for all $y \ge 0$ if and only if $f$ is log-concave.
\end{proposition}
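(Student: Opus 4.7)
The plan is to reduce the concavity of $p \mapsto F(F^{-1}(p)+y)$ to a monotonicity statement about $f$ via the chain rule, and then recognise that monotonicity as the definition of log-concavity.

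First I would set $G_y(p) = F(F^{-1}(p)+y)$ and compute its derivative. Writing $x = F^{-1}(p)$, so that $\frac{dx}{dp} = 1/f(x)$ (using that $f > 0$), the chain rule gives
$$
G_y'(p) = \frac{f(x+y)}{f(x)}.
$$
Since $p \mapsto F^{-1}(p)$ is a strictly increasing bijection from $(0,1)$ onto $\RR$, concavity of $G_y$ in $p$ is equivalent to $G_y'(p)$ being non-increasing in $p$, which in turn is equivalent to the ratio $x \mapsto f(x+y)/f(x)$ being non-increasing in $x$.

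Next I would take logarithms. Letting $h = \log f$, the monotonicity above is the statement that $h(x+y) - h(x)$ is non-increasing in $x$ for every $y \ge 0$. Since $f$ is differentiable, so is $h$, and differentiating in $x$ this becomes $h'(x+y) \le h'(x)$ for all $x \in \RR$ and all $y \ge 0$, i.e.\ $h'$ is non-increasing. This is exactly concavity of $h = \log f$, that is, log-concavity of $f$.

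Finally I would assemble the equivalences: $G_y$ is concave on $(0,1)$ for every $y \ge 0$ iff $x \mapsto f(x+y)/f(x)$ is non-increasing for every $y \ge 0$ iff $\log f$ is concave. There is no real obstacle here; the only minor point to be careful about is the direction of monotonicity when pulling $p$ through the increasing map $F^{-1}$, and the quantifier "for all $y \ge 0$" which is precisely what converts a pointwise statement about $h'$ into full concavity of $h$. The endpoint values $G_y(0)=0$ and $G_y(1)=1$ (in the limit) play no role in the argument.
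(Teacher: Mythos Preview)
Your proposal is correct and follows essentially the same route as the paper: compute $H_y'(p)=f(F^{-1}(p)+y)/f(F^{-1}(p))$, observe that concavity of $H_y$ is equivalent to this ratio being non-increasing in $x=F^{-1}(p)$, and then identify that monotonicity (for all $y\ge 0$) with concavity of $\log f$. The only cosmetic difference is that the paper phrases the last step via the finite-difference inequality $\log f(b+y)-\log f(b)\le \log f(a+y)-\log f(a)$ for $b\ge a$, whereas you differentiate to get $h'(x+y)\le h'(x)$; under the standing assumption that $f$ is differentiable these are equivalent.
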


\begin{proof}[Proof of Propositions \ref{th:lin1} and \ref{th:geo1}]
First, let
$$
G(p) = f(F^{-1}(p)).
$$
Note that the derivative  is given by the formula
$$
G'(p) = \frac{f'(a)}{f(a)}
$$
where $a = F^{-1}(p)$.  Therefore the function
 $G'$ is decreasing if and only $\log f$ is concave.

Now fix $y \ge 0$ and let
$$
H_y(p) =  F( F^{-1} (p) + y ).
$$
Note that 
$$
H'_y(p) = \frac{ f( F^{-1}(p) + y)}{f(F^{-1}(p))}.
$$
Therefore, the function $H'_y$   is decreasing if and only if
$$
 \log f(b+y) - \log f(b)  \le  \log f(a+y) - \log f(a)
$$
for all $b \ge a$. 
 This last condition holds for all $y \ge 0$ if and only if $\log f$ is concave.
\end{proof}

\section{Various remarks}
We conclude this note with various remarks expanding on the main results.
As before, let $f$ be a strictly positive  probability density, and 
$F$ its cumulative distribution function  and $F^{-1}$ its quantile function. 
In this section, we further assume that $f$ is log-concave.
We will use the notation
$$
G = f \circ F^{-1}
$$
and
$$
H_y = F( F^{-1}(\cdot) + y).
$$

\subsection{Group property}
To better understand the connection between the functions $G$ and $H_y$
 introduced in the proofs of Propositions \ref{th:lin1} and \ref{th:geo1}, 
 note that
the family of functions $(H_y)_{y \in \RR}$ on $[0,1]$ form
a group with respect to composition
$$
H_{y_1+y_2} = H_{y_1} \circ H_{y_2}.
$$
Note that $H_0 = \mathrm{Id}$, and that the
 generator of this group is given by $G$ in the sense that
$$
\frac{H_y - \mathrm{Id}}{y} \to G
$$
as $y \to 0$.  The above observations appear in the paper
of Kulik \& Tymoshkevych \cite{kt} in the case where
 $f= \varphi$ is the standard normal density.  In this
case, the function $G = \varphi \circ \Phi^{-1}$ is called
the Gaussian isoperimetric function.

\subsection{Recovering $F$}
Given the function
$G$  we can 
solve for the distribution function $F$ and hence the density $f$.
Moreover, the solution is unique up to a free location parameter.
Indeed, fix $p_0$ and declare $F(a) = p_0$.  Then
$$
\int_{p_0}^p \frac{d q}{G(q) } = \int_{p_0}^p \frac{ dq}{f (F^{-1}(q))}
= F^{-1}(p) - a
$$
from which $F$ can be recovered.

Furthermore, given the family of functions $(H_y)_{ y \ge 0}$
we can recover $F$ in two different ways, where again we 
fix $p_0$ and set $F(a) = p_0$.  Firstly, note that
$$
\partial_y H_y(p) |_{y=0} = G(p)
$$
to recover $F$ as described above.
 Secondly, we can simply observe that
$$
F(x) = H_{x-a}(p_0) \mbox{ for all } x \in \RR.
$$

\subsection{Symmetries}  
If the integrable random variable $X$ has  arithmetic  symmetry, in the sense that $-X$
has the same law as $X$, then its call function satisfies
\begin{align*}
C_X(K) & = \EE[ (X-K)^+ ] \\
& =   \EE[ X - K + (-X+K)^+ ] \\
& =  - K + C_X(-K).
\end{align*}
The upper boundary of its lift zonoid satisfies
\begin{align*}
\hat C_X(p) & = \min_K [ C_X(K) + pK ]
\\& =  \min_K [ C_X(-K)  - (1-p) K ] \\
& = \hat C_X(1-p).
\end{align*}

Note that if $f$ is an even function and $\hat C_X(p) = f( F^{-1}(p) )$, then 
$X$ has arithmetic symmetry since in this case $F^{-1}(1-p) = - F^{-1}(p)$.

A strictly positive random variable $X$ has geometric symmetry if
$$
\EE[ \psi(X) ] = \EE\left[ X \psi\left( \frac{1}{X} \right) \right]
$$
for all non-negative $\psi$.  In particular, geometric symmetry implies
$\EE(X) =  1$ and that the call function satisfies the put-call symmetry
formula
\begin{align*}
C_X(K) & = \EE[ (X-K)^+ ] \\
&  =   \EE[ X - K + (K-X)^+ ] \\
& = 1 - K + K \EE\left[ X \left( \frac{1}{X} - \frac{1}{K} \right)^+ \right] \\
& = 1  - K + K C_X(1/K)
\end{align*}
for $K > 0$. In this case, the upper boundary of its lift zonoid satisfies
\begin{align*}
\hat C_X(p) & = \min_K [ C_X(K) + pK ]  \\
& = 1 -  \max_{K > 0} K (1 - p -  C_X(1/K))  \\
& = 1 - \hat C_X^{-1}(1-p)
\end{align*}
by Proposition \ref{th:duality2}.  Since the lift zonoid of $X$ is given by
$$
\hat Z_X = \{ (p,q): \ 0 \le p \le 1, 1 - \hat C_X(1-p) \le q \le \hat C_X(p) \}
$$
we see that another way to characterise geometric symmetry of $X$ is that the 
lift zonoid is symmetric about the line $p=q$.

Note that if $f$ is even and $\hat C_X(p) = F( F^{-1}(p) + y)$, then 
$X$ has geometric symmetry thanks to the calculation 
$$
\hat C_X^{-1}(q) = F( F^{-1}(q) - y) = 1 - \hat C_X(1-q).
$$

Applications of arithmetic and geometric symmetries to construct
semi-static hedging strategies for certain barrier options
is explored in the paper of Carr \& Lee \cite{CL}.

\subsection{The initial stock price}
Note that $G(1) = 0$.   Hence, if the martingale $S$ is such that 
$$
\min_{K}\{ \EE[(S_t-K)^+ + pK \} = Y(t) G(p)
$$
for some increasing function $Y$, then $\EE(S_t) = S_0 = 0$.    To consider models with non-zero
initial prices, let $\tilde S_t = s + S_t$ for some constant $s$.  Then
\begin{align*}
\min_{K}\{ \EE[(\tilde S_t-K)^+ + pK \} &= 
\min_{K}\{ \EE[(S_t-(K-s) )^+ + p(K-s) + ps \} \\
& = ps + Y(t) G(p).
\end{align*}

Similarly, note that $H_y(1) = 1$ for all $y \ge 0$.  Hence if
$$
\min_{K}\{ \EE[(S_t-K)^+ + pK \} = H_{Y(t)}(p)
$$
then $S_0 = 1$.   To consider more general initial prices, let $s > 0$ and 
$\tilde S_t = s S_t$.  Then
\begin{align*}
\min_{K}\{ \EE[(\tilde S_t-K)^+ + pK \} &= 
s \min_{K}\{ \EE[(S_t-K/s )^+ + pK/s  \} \\
& = s H_{Y(t)}(p).
\end{align*}

\subsection{The call function}
Given the upper boundary of the lift zonoide $\hat C_X$, we can compute
the corresponding call function $C_X$ by Proposition \ref{th:duality}.
We now explore these representations when $\hat C_X$ has the specific
forms appearing in Theorems \ref{th:lin} and \ref{th:geo}.

First suppose
\begin{align*}
\hat C_X(p)  & = sp + G(p) \\ & =sp+  f( F^{-1}(p) ) 
\end{align*}
for some constant $s$. 
Our first calculation is then
\begin{align*}
C_X(K) & =  \max_{0 \le p \le 1 }[ \hat C_X(p) - pK]   \\
  & = f ( U(K-s) ) - F( U(K-s) ) (K-s),
\end{align*}
where $U$ is inverse of the decreasing function $f'/f = (\log f)'$.
Furthermore,  we have  the calculation 
\begin{align*}
\PP\left( X \ge K  \right) &= -  C_X'(K) \\
& =  F(U(K-s) ).
\end{align*}
In the special case when $f= \varphi$ is the standard normal density, we have $U(x) = - x$ and we
see that $X$ has the standard 
normal distribution in agreement with the Bachelier model and Proposition \ref{th:B}.

Similarly, if 
\begin{align*}
\hat C_X(p)  & = s H_y(p) \\ & = s F( F^{-1}(p) + y) 
\end{align*}
for some $y > 0$ and $s > 0$, then
we have
\begin{align*}
C_X(K) =   F( V_y(K/s) + y ) - F( V_y(K/s) ) K,
\end{align*}
where $V_y$ is the inverse of the decreasing function $f(\cdot + y)/f$. Furthermore,
 we have
$$
\PP\left( X \ge K  \right) = F(V_y(K/s) ).
$$
In the special case when $f= \varphi$ and $V_y(x)= -\log x/y - y/2$,   we
see that $ \log X$ has the 
normal distribution with mean $-y^2/2$ and variance $y^2$,
 in agreement with the Black--Scholes model and Proposition \ref{th:G}.

\subsection{Implied volatility}\label{se:upperbound}
Let
$$
C_F(y, K)   =  \max_{0 \le p \le 1} [H_y(p) - pK].
$$
This corresponds to a  call price on a stock with initial price $S_0 = 1$ and strike $K$, or equivalently
 the  call price normalised by the initial stock price and $K$ is the strike price
normalised by the initial stock price.

We now show   for fixed $K$ that $C_F(\cdot,K)$ takes values in 
the interval $[(1-K)^+, 1)$.  Note that $H_0(p) = p$ and so
\begin{align*}
C_F(0,K) &= \max_{0 \le p \le 1} (1-K)p  \\
& = (1-K)^+.
\end{align*}
Also, note for $y > 0$ that we have
$$
C_F(y,K) = F( V_y(K) + y ) - F( V_y(K) ) K.
$$
where 
$V_y$ is the inverse of the decreasing function $f(\cdot + y)/f$.
  It is clear from the 
formula that $C_F(y, K) < 1$.

Now, one can verify by differentiation that 
$$
C_F(y,K) = (1-K)^+ + \int_0^y  f( V_u(K) + u ) du.
$$
Since the quantity $y$ corresponds to $\sigma \sqrt{t}$ in the Black--Scholes model,
the above formula can be seen as a generalisation of the formula for the vega,
the sensitivity of the call price with respect to the Black--Scholes volatility.
In particular, we see that 
$$
y \mapsto C_F(y,K)
$$
is continuous and strictly increasing.  We have
for every $0 < p < 1$ the inequality
\begin{align*}
C_F(y,K)  \ge F( F^{-1}(p ) +y ) - p  K.
\end{align*}
By taking   $y \uparrow \infty$ and then $p  \downarrow 0$ we see that $C_F(y,K) \to 1$. 
In particular, for every $c \in [(1-K)^+, 1)$ there
is a unique $y$ such that 
$$
c =  C_F(y^*,K).
$$ 
This $y^*$  generalises the notion of Black--Scholes implied volatility.

We now show that $y^*$ can be recovered from $c$ by the formula
$$ %\begin{equation}\label{eq:inverse}
y^* = \min_{0 \le p \le 1}[ F^{-1}(c+ pK) -F^{-1}(p) ].
$$ %%\end{equation}
Indeed, we can rearrange the inequality
\begin{align*}
c & \le  F( F^{-1}(p) +y^* ) - p K,
\end{align*}
which holds for all $0 \le p \le 1$, to yield the bound
$$
y^*  \le  F^{-1}(c+ pK) -F^{-1}(p).
$$
Since there is equality above when $p = F( V_y(K) )$, the claim 
is proven.   

The above representation of implied volatility
as the value of a minimisation problem  
was exploited in \cite{SIFIN}
to obtain upper bounds on  Black--Scholes implied volatility.

\subsection{Local volatility}
Suppose the martingale $S$ evolves according to
the stochastic differential equation
$$
dS_t = \sigma(t,S_t) dW_t
$$
where $W$ is a Brownian motion, and the local volatility 
function $\sigma:\RR_+ \times \RR \to \RR_+$
is continuous.  It is well known that the linear volatility 
$\sigma$ can be recovered from the 
call prices 
$$
C(t,K) = \EE[ (S_t-K)^+]
$$
by Dupire's formula
$$
\sigma(t,K)^2 =  \frac{ 2 \partial_t C}{\partial_{KK} C} \big|_{(t,K)}.
$$
See, for instance, the book of Musiela \& Rutkowski \cite[Proposition 7.3.1]{MR} for a 
precise statement and proof.

Hence by Proposition \ref{th:duality}, the function $\sigma$ can be recovered
from the upper boundary  of the lift-zonoid
$$
\hat C(t,p) = \sup\{ \EE[ S_t g(S_t) ],  \ \ \mbox{ measurable }g:\RR \to [0,1] \mbox{ with  }\EE[ g(S_t) ] = p \}
$$
 via
$$
\sigma(t, \partial_p \hat C|_{(t,p)})^2 =  - 2 \partial_t \hat{C} \ \partial_{pp} \hat C|_{(t,p)}.
$$

In the case where $f$ is log-concave and 
$$
\hat C(t,p) = S_0 p + Y(t) f( F^{-1}(p) )
$$
for an increasing function $Y$, 
we get
$$
\sigma(t, K)^2 = -2 Y(t) \dot{Y}(t) ( \log f)''\left[ U\left(\frac{K-S_0}{Y(t)} \right) \right]
$$
where $U$ is the inverse of the decreasing function $f'/f = (\log f)'$.
%\sigma(t, T(t) (\log f)'(x) ) ^2 =  2 T(t) \dot{T}(t) (- \log f)''(x)
Note that when we specialise to 
$Y(t)= \sigma  \sqrt{t}$ and $f = \varphi$ the standard normal density, the right side
is equal to the constant $\sigma^2$, in agreement with the Bachelier model and Proposition \ref{th:B}.

Finally, in the case where $f$ is log-concave and 
$$
\hat C(t,p) = S_0 F( F^{-1}(p) + Y(t) )
$$
we get
$$
\bar \sigma(t, K)^2 = 2 \dot{Y}(t) \left( (\log f)'(V_{Y(t)}(K/S_0) ) -  (\log f)'(V_{Y(t)}(K/S_0)+ Y(t)) \right)
$$
where $\bar \sigma(t,K) = \sigma(t,K)/K$ for $K > 0$ is the geometric
volatility, and 
$V_y$ is the inverse of the decreasing function $f( \cdot +y )/f$ for $y > 0$.
Again, when we specialise  to the case
$Y(t)= \sigma \sqrt{t}$ and $f = \varphi$ we see that the right side is the constant $\sigma^2$, in 
line with the Black--Scholes model and Proposition \ref{th:G}.

\section{Acknowledgement} 
I would like to thank the Cambridge Endowment for Research in Finance for 
their support. I would also like to thanks Thorsten Rheinl\"{a}nder for introducing
me the notion of a lift zonoid.

\end{document}